\newtheorem{remark}{Remark}
\newtheorem{theo}{Theorem}
\newtheorem{corollary}{Corollary}
\begin{document}

\title{A Multi-Service Oriented Multiple-Access Scheme for Next-Generation
Mobile Networks}

\author{\IEEEauthorblockN{Nassar Ksairi,
Stefano Tomasin and M\'erouane Debbah}
Mathematical and Algorithmic Sciences Lab,\\
France Research Center, Huawei Technologies Co. Ltd.,
Boulogne-Billancourt, France.
\IEEEauthorblockA{Emails:
\{nassar.ksairi, stefano.tomasin, merouane.debbah\}@huawei.com}}

\maketitle

\begin{abstract}
One of the key requirements for fifth-generation (5G) cellular networks is their
ability to handle densely connected devices with different quality of service
(QoS) requirements. In this article, we present multi-service oriented multiple
access (MOMA), an integrated access scheme for massive connections with
diverse QoS profiles and/or traffic patterns originating from
both handheld devices and machine-to-machine (M2M) transmissions.
MOMA is based on a) establishing separate classes of users based on relevant
criteria that go beyond the simple handheld/M2M split,
b) class dependent hierarchical spreading of the data signal and
c) a mix of multiuser and single-user detection schemes at the receiver.
Practical implementations of the MOMA principle are provided for base stations
(BSs) that are equipped with a large number of antenna elements.
Finally, it is shown that such a massive-multiple-input-multiple-output (MIMO)
scenario enables the achievement of all the benefits of MOMA even with a
simple receiver structure that allows to concentrate the receiver complexity
where effectively needed.
\end{abstract}

\IEEEpeerreviewmaketitle

\section{Introduction}
\label{sec:intro}

In the aim of deploying the Internet of Things (IoT), designing a unified radio
access technique for both machine-to-machine (M2M) communications and
handheld mobile devices is a challenging problem. One major issue is the
difference in terms of traffic patterns and QoS requirements~\cite{smart_city}
between these two types of communications. Another issue is the large number of
IoT devices required to be simultaneously served. Further difficulties arise
from the fact that M2M transmissions do not all have the same QoS profile and
traffic characteristics~\cite{rel_13,ngmn}. 

To address some of these challenges, the Third Generation Partnership Project
(3GPP) has started to add M2M-type communications support into the radio access
subsystem of LTE. Several proposals emerged within this work. Two of them are
dubbed LTE for Machine-Type Communications (LTE-M) and Narrow-band LTE-M
(NB LTE-M), each introducing a new user equipment (UE) category, the so-called
Cat.~1.4MHz for LTE-M and Cat.~200kHz for NB LTE-M~\cite{lte_m}.
As their respective names indicate, these new UE categories restrict M2M
transmissions to a small subband of the available bandwidth that is orthogonal
to the broadband users. The same principle is used in
Filtered-OFDM~\cite{f_ofdm} with the difference that in Filtered-OFDM,
subband-based filtering is applied to enable the use of a different
transmission time interval (TTI) and OFDM numerology on the M2M
subband. Other solutions consist in providing a
separate network for M2M connections. Examples include LoRa$^{\textrm{TM}}$ and
SIGFOX$^{\textrm{TM}}$~\cite{smart_city}, which both operate in the unlicensed
frequency bands. While the physical layer of
SIGFOX$^{\textrm{TM}}$ is based on frequency-division multiple access (FDMA)
with ultra narrow band sub-channels, the technology adopted in
LoRa$^{\textrm{TM}}$~\cite{lora_phy} employs is a mix of FDMA and
of chirp spread spectrum~\cite{coverage_enhancement}. 
However, none of the existing solutions is able to meet the following crucial
requirements all at once.\\
{\bf 1. Denser IoT deployment:}
The existing proposals offer significant improvement over current
cellular standards in terms of support for IoT access but there is a
need to support much larger numbers of simultaneous M2M transmissions.
This goal should be met without sacrificing the QoS of broadband mobile
services.\\
{\bf 2. Multi-class users/services:}
Not enough attention has been paid to the different QoS and traffic profiles
within the class of IoT devices. Indeed, many of these devices
\emph{will not all be battery limited sensors and will not only emit small
packets of data}~\cite{rel_13}.
Moreover, there is a need to distinguish from within the services running on
handheld devices those that have traffic characteristics and
data rate requirements, e.g. short messages, previously considered to be typical
of IoT services. Significant gains in resource utilization efficiency are
expected from a multiple-access scheme that treats devices with different QoS
profiles, whether handheld devices or IoT machines, as belonging
to separate classes of users.\\
{\bf 3. Flexibility in resource assignment:}
The new multiple-access scheme should be flexible in assigning resources
to the different classes and to the different users within each class.\\
{\bf 4. Efficiency in resource utilization:}
The new multiple-access scheme should be more efficient in resource utilization
than the orthogonal schemes adopted in all the existing proposals.

\section{Multi-service Oriented Multiple Access}
\label{sec:description}
\begin{figure*}
 \centering
 \includegraphics[width=0.9\hsize]{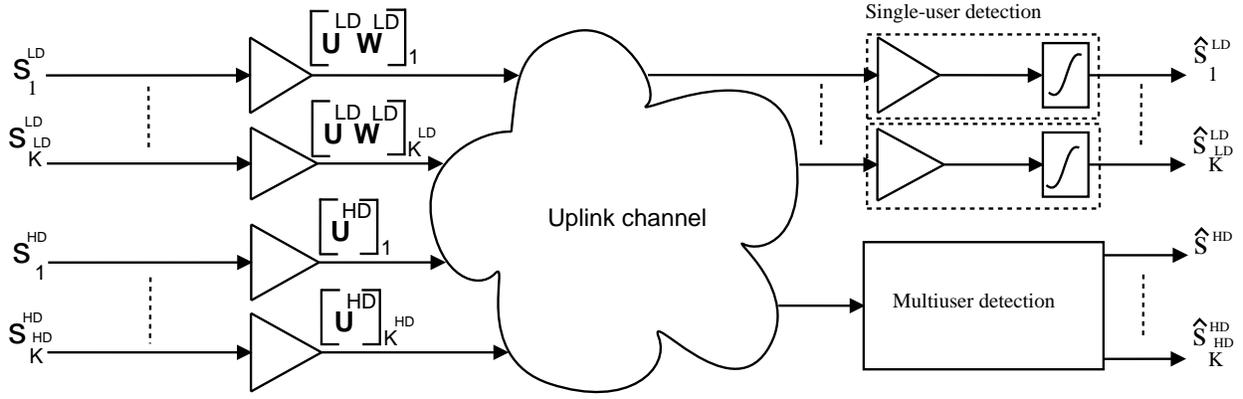}
 \caption{MOMA transmitters of two classes of users ($L=2$).}
 \label{fig:moma_principle}
\end{figure*}
Consider a cell containing a base station (BS) required to serve $K$ users in
the \emph{uplink}\footnote{The MOMA principle can be straightforwardly
extended to the downlink.}. Assume that these users are grouped into
$L\geq2$ classes defined as follows.
\begin{itemize}
\item {\bf One maximum data rate (HD) class:} Here, HD stands for ``high data
rate''. For this class the objective is to obtain a data rate \emph{as high as
possible} for a given number of users. Typically, these users are associated to
data hungry applications on handheld devices.
\item {\bf $\boldsymbol{L-1}$ constant data rate (LD) classes:} Here, LD stands
for ``low data rate''. For these classes, the objective is to accommodate
\emph{as many users as possible with granted data rates}
$r_l^{\mathrm{LD}}$ that satisfy
\begin{equation}
 \label{eq:r_ld_order}
r_1^{\mathrm{LD}}>r_2^{\mathrm{LD}}>\ldots>r_{L-1}^{\mathrm{LD}}\:.
\end{equation}
These users are typically associated with fixed low-rate transmissions from both
handheld devices (such as short messages) and from different types of IoT
services.
\end{itemize}
In the sequel, we use $\mathcal{K}^{\mathrm{HD}}\subset\{1,2,\ldots,K\}$ (resp.
$\mathcal{K}_l^{\mathrm{LD}}$) to designate the indexes
of the users of the HD (resp. the $l$-th LD) class. 
We also define $K^{\mathrm{HD}}\stackrel{\mathrm{def}}{=}
\left|\mathcal{K}^{\mathrm{HD}}\right|$,
$K_l^{\mathrm{LD}}\stackrel{\mathrm{def}}{=}
\left|\mathcal{K}_l^{\mathrm{LD}}\right|$,
$\mathcal{K}^{\mathrm{LD}}\stackrel{\mathrm{def}}{=}
\bigcup_{l\in\{1\cdots L-1\}}\mathcal{K}_l^{\mathrm{LD}}$,
$K^{\mathrm{LD}}\stackrel{\mathrm{def}}{=}
\left|\mathcal{K}^{\mathrm{LD}}\right|$
and
$\mathcal{K}\stackrel{\mathrm{def}}{=}
\mathcal{K}^{\mathrm{HD}}\cup\mathcal{K}^{\mathrm{LD}}$.
 
Since what matters for HD users is maximizing throughput,
proper scheduling techniques will limit the number of
simultaneously-transmitting users, thus it is reasonable to assume that
$K^{\rm HD}$ will be small.
This observation leads us to use multiuser detection techniques for this class
of users. On the other hand, LD users will be quite numerous so that multiuser
detection would be too complex. Therefore, we propose the use of single-user
detection for this class of users. Moreover, to allow for high-performance HD
connections, we want to limit interference from LD transmissions on HD received
signals, we thus let HD and LD users to be quasi-orthogonal to each other.
Lastly, this quasi-orthogonality is implemented in the code domain using a novel
{\it class-dependent hierarchical spreading} scheme.

In the rest of this article we show how such a multiple-access scheme has
the desirable property that the available radio resources are used in a
\emph{flexible and efficient} manner to allow connecting not only a large number
of IoT devices, but also a large number of handheld devices running relatively
low data rate services, while guaranteeing the satisfaction of the broadband
services with high QoS requirements. This flexibility and
efficiency in using the available resources is to be contrasted with the
existing access solutions for IoT, such as LoRa$^{\textrm{TM}}$,
SIGFOX$^{\textrm{TM}}$ and LTE-M, where the resources reserved for IoT
services are under-used and their proportion to the overall resources
cannot be dynamically adjusted.

\section{MOMA Transmitter}

The MOMA transmitter ($L=2$) is illustrated in Fig.~\ref{fig:moma_principle}.
\subsection{Class Dependent Hierarchical Spreading}

Let $\mathbf{U}$ be an orthogonal-code matrix, e.g. Walsh-Hadamard, whose
columns are $N$-long spreading codes.
In MOMA for $L$ classes of users, the set of columns of matrix $\mathbf{U}$ is
divided into $L$ disjoint subsets that form $L$ matrices, namely
$\mathbf{U}^{\mathrm{HD}}$ and
$\{\mathbf{U}_l^{\mathrm{LD}}\}_{l\in\{1\cdots L-1\}}$ with respective
dimensions $N\times N^{\mathrm{HD}}$ and
$\{N\times N_l^{\mathrm{LD}}\}_{l\in\{1\cdots L-1\}}$.
The $N^{\mathrm{HD}}$ spreading sequences of
$\mathbf{U}^{\mathrm{HD}}$ will be assigned to $K^{\mathrm{HD}}$ HD users.
Typically, $K^{\mathrm{HD}}\leq N_l^{\mathrm{LD}}$
so that the HD users could be scheduled in a quasi-orthogonal manner. This
inequality could however be violated in the case where the spatial dimension
is exploited, as we will see. The $N_l^{\mathrm{LD}}$ columns
of $\mathbf{U}_l^{\mathrm{LD}}$ will be shared among the users
of the $l$-th LD class. The scenario of interest for MOMA is when
$K_l^{\mathrm{LD}}>N_l^{\mathrm{LD}}$, i.e. LD resources are overloaded, and
when the following inequalities motivated by \eqref{eq:r_ld_order} are
satisfied:
\begin{equation}
 \label{eq:scenario_of_interest}
\frac{K_{L-1}^{\mathrm{LD}}}{N_{L-1}^{\mathrm{LD}}}>\cdots
\frac{K_1^{\mathrm{LD}}}{N_1^{\mathrm{LD}}}>
\frac{K^{\mathrm{HD}}}{N^{\mathrm{HD}}}\:.
\end{equation}
The transmitted signals of the $K_l^{\mathrm{LD}}$ users within each LD
class~$l$ are formed by means of first linearly combining their data symbols
using a rectangular {\it combining} matrix $\mathbf{W}_l^{\mathrm{LD}}$ of
dimensions $N_l^{\mathrm{LD}}\times K_l^{\mathrm{LD}}$, before spreading the
resulting symbols using the columns of matrix $\mathbf{U}_l^{\mathrm{LD}}$.
Applying this class-dependent hierarchical spreading, the final
spreading code $\mathbf{c}_k$ of a LD user~$k$ can be written as
\begin{equation}
 \label{eq:spreading_code_LD}
 \mathbf{c}_k=
\left[\mathbf{U}_l^{\mathrm{LD}}\mathbf{W}_l^{\mathrm{LD}}
\right]_{j_k^{\mathrm{LD}}},\quad k\in\mathcal{K}_l^{\mathrm{LD}},
l\in\{1 \cdots L-1\}\:,
\end{equation}
where $j_k^{\mathrm{LD}}\in\{1,2,\ldots,K_l^{\mathrm{LD}}\}$ is the index of
the column of the matrix $\mathbf{U}_l^{\mathrm{LD}}\mathbf{W}_l^{\mathrm{LD}}$
assigned to user~$k\in\mathcal{K}_l^{\mathrm{LD}}$ and where
$\left[\mathbf{M}\right]_{i,j}$ designates the $i$-th element of the $j$-th
column of matrix $\mathbf{M}$.
In principle, $\mathbf{W}_l^{\mathrm{LD}}$ could be any
$N_l^{\mathrm{LD}}\times K_l^{\mathrm{LD}}$ matrix chosen such that the transmit
power constraint is respected:
\begin{equation}
 \label{eq:w_ld_cond}
 \forall j\in\mathcal{K}_l^{\mathrm{LD}},
\sum_{u=1}^{N_l^{\mathrm{LD}}}
\left|\left[\mathbf{W}_l^{\mathrm{LD}}\right]_{u,j}\right|^2=1\:.
\end{equation}
In the following, we assume that the components of $\mathbf{W}_l^{\mathrm{LD}}$
are chosen as realizations of independent and identically distributed (i.i.d.)
random variables that can take the values $+\frac{1}{\sqrt{N_l^{\mathrm{LD}}}}$
and $-\frac{1}{\sqrt{N_l^{\mathrm{LD}}}}$ with equal probabilities.
In contrast to LD users, no combining is applied to the symbols of
the users from the HD class. The spreading code $\mathbf{c}_k$ used on the
signals of a user $k\in\mathcal{K}^{\mathrm{HD}}$ can thus be written as
\begin{equation}
 \label{eq:spreading_code_HD}
 \mathbf{c}_k=\left[\mathbf{U}^{\mathrm{HD}}\right]_{j_k^{\mathrm{HD}}},\quad
k\in\mathcal{K}^{\mathrm{HD}}\:,
\end{equation}
where $\left[\mathbf{M}\right]_j$ designates the $j$-th column of matrix
$\mathbf{M}$ and where
$j_k^{\mathrm{HD}}\in\{1,2,\ldots,K^{\mathrm{HD}}\}$ is the index of the
code assigned to user~$k\in\mathcal{K}^{\mathrm{HD}}$ from among
the columns of $\mathbf{U}^{\mathrm{HD}}$.
\begin{remark}
Class dependent hierarchical spreading is distinct from
the 2-step spreading (the so-called channelization and scrambling
steps~\cite{umts}) used in third-generation (3G) systems.
Indeed, multiplication with $\mathbf{W}_l^{\mathrm{LD}}$
is intended to precondition signals from a number of users so that they can
be spread using a {\it much smaller} number of orthogonal codes. This operation
is thus completely unrelated, in its conception and in its purpose,
to both channelization and scrambling in 3G.
\end{remark}
\subsection{MOMA-OFDM}

An orthogonal frequency division multiplexing (OFDM) implementation of MOMA
(that we designate as MOMA-OFDM) consists in mapping the $N$ symbols from users'
spread signals to $N\leq N_{\mathrm{FFT}}$ consecutive subcarriers (SCs) in one
OFDM symbol, where $N_{\mathrm{FFT}}$ is the total number of SCs per OFDM
symbol. Let $\mathcal{N}$ be a subset of SCs chosen such that
$\left|\mathcal{N}\right|=N$. Let $P_k$ designate the average transmit power of
user $k\in\mathcal{K}$ . The signal transmitted by user~$k$ on subcarrier
$n\in\mathcal{N}$ is given by
\begin{equation}
 \label{eq:mc_cdma_tx_signal}
 x_{k,n}=\sqrt{P_k}\left[\mathbf{c}_k\right]_n s_k\:,
\end{equation}
where $\left[\mathbf{c}_k\right]_n$ designates the component of $\mathbf{c}_k$
mapped to subcarrier $n$ and where $s_k$ is the zero-mean
unit-variance data symbol transmitted by user~$k$.
Spreading codes $\mathbf{c}_k$ should be normalized such that
$\mathbb{E}\left[\left|x_{k,n}\right|^2\right]=\frac{P_k}{N}$.
MOMA-OFDM combines the benefits of both frequency-domain spreading, e.g.
the ability to harvest the frequency diversity of the channel and the
robustness against carrier frequency shifts, and of OFDM transmission, e.g.
robustness against timing errors.

\subsection{MOMA with Massive MIMO Base Stations}
We now turn our attention to the case where BS is equipped with a large number
$M>>1$ of antennas while each user is equipped with a single
antenna\footnote{Extension to the case of multi-antenna user terminals
is possible.}. This massive MIMO scenario,
which is expected to be prevalent in 5G networks, proves
to be particularly advantageous for MOMA, from both the performance and the
receiver complexity perspectives. We designate this implementation of MOMA as
MIMO-MOMA.
Due to the spatial multiplexing capabilities inherent to this scenario, the
number of HD users would typically be larger than the number of available HD
orthogonal codes, i.e. $K^{\mathrm{HD}}>N^{\mathrm{HD}}$.
The $N^{\mathrm{HD}}$ columns of matrix $\mathbf{U}^{\mathrm{HD}}$
should thus be shared among the $K^{\mathrm{HD}}$ HD users in such a way that
each column is reused by
$\left\lceil K^{\mathrm{HD}}/N^{\mathrm{HD}}\right\rceil$ users.

\section{MOMA Receiver}
\label{sec:M_MIMO}

Denote by $\mathbf{h}_{k,n}$ the vector of frequency-domain small-scale
fading coefficient at subcarrier $n$ between user~$k$ and the $M$ antennas of
the BS during the current OFDM symbol and assume that the components of
$\mathbf{h}_{k,n}$ are zero-mean i.i.d. random variables and that
$\forall a\in\{1,2,\ldots,M\}, \forall k\in\mathcal{K}$,
$\forall n,m\in\mathcal{N}$, $\mathbb{E}\left[\left[
\mathbf{h}_{k,n}\right]_a^*\left[\mathbf{h}_{k,m}\right]_a\right]=
c_{n-m}^h$ where $\{c_u^h\}_{u\in\mathbb{Z}}$ is a frequency domain
autocorrelation sequence.
Finally, assume that coefficients $\mathbf{h}_{k,n}$ can be estimated
at the BS by relying on uplink pilot sequences sent by the different user
terminals. The vector $\mathbf{y}_n$ of samples received at the $M$ BS antennas 
at subcarrier $n$ is given by
\begin{equation}
\label{eq:rec_samples}
 \mathbf{y}_{n}=\sum_{k\in\mathcal{K}}\sqrt{g_k P_k}
\mathbf{h}_{k,n}\left[\mathbf{c}_k\right]_n s_k+\mathbf{v}_n\:,
\end{equation}
where $\mathbf{v}_n$ is a $M\times1$ vector of i.i.d.
$\mathcal{CN}\left(0,\sigma^2\right)$ noise samples and $g_k$ is the large-scale
fading factor. The proposed receiver scheme consists in performing the following
operations.

{\bf Spatial demultiplexing} We propose the use of linear receive combining,
e.g. according to maximum-ratio (MRC) or
minimum-mean-square-error (MMSE) criteria.
Combining with coefficients $\mathbf{d}_{k,n}$ provides the samples
$r_{k,n}\stackrel{\mathrm{def}}{=}
\frac{1}{M}\mathbf{d}_{k,n}^{\mathrm{H}}\mathbf{y}_n$:
\begin{equation}
  r_{k,n}=\sqrt{g_k P_k}\left[\tilde{\mathbf{c}}_k\right]_n s_k+
\sum_{j\neq k}\sqrt{g_j P_j}
\left[\tilde{\mathbf{c}}_j\right]_n s_j+
\left[\tilde{\mathbf{v}}_k\right]_n\:,
 \label{eq:demux_samples}
\end{equation}
where we defined for any $k,j\in\mathcal{K}$ and
$\{n_1,n_2,\ldots,n_N\}=\mathcal{N}$
\begin{equation}
 \label{eq:cj_def}
 \tilde{\mathbf{c}}_j\stackrel{\mathrm{def}}{=}
\frac{1}{M}
\left[\mathbf{d}_{k,n_1}^{\mathrm{H}}\mathbf{h}_{j,n_1}
\left[\mathbf{c}_j\right]_{n_1} \cdots
\mathbf{d}_{k,n_N}^{\mathrm{H}}\mathbf{h}_{j,n_N}
\left[\mathbf{c}_j\right]_{n_N}
\right]^{\mathrm{T}}
\end{equation}
\begin{equation}
\tilde{\mathbf{v}}_k\stackrel{\mathrm{def}}{=}\frac{1}{M}
\left[\mathbf{d}_{k,n_1}^{\mathrm{H}}\mathbf{v}_{n_1}
\cdots
\mathbf{d}_{k,n_N}^{\mathrm{H}}\mathbf{v}_{n_N}
\right]^{\mathrm{T}}
\end{equation} 

In the general case where $\{h_{k,n}\}_{n\in\mathcal{N}}$ are not
fully correlated, users' channels are selective in frequency. This implies that
the effective spreading codes $\tilde{\mathbf{c}}_k$ and $\tilde{\mathbf{c}}_j$
of users $k$ and $j$ from two different classes are no longer orthogonal.

\begin{remark}[Channel Hardening]
The effect of receive combining is averaging out small-scale fading over the
array, in the sense that the variance of the effective scalar channel gain
$\frac{1}{M}\mathbf{d}_{k,n}^{\mathrm{H}}\mathbf{h}_{k,n}$ decreases with
$M$. This effect is known as channel hardening and is a consequence of
the law of large numbers~\cite{mimo_myths}. the frequency response
$\frac{1}{M}\mathbf{d}_{k,n}^{\mathrm{H}}\mathbf{h}_{k,n}$ of the effective
channel is thus asymptotically flat with $M$ and the above-mentioned loss of
orthogonality vanishes.
\end{remark}  

{\bf HD Users Detection:} In order to obtain HD data rates
that are as high as possible, multiuser detection should be used. However,
we limit the use of multiuser
detection to cases where it is effectively needed. We know from
the literature~\cite{mimo_cdma_tse} that, under the assumption of uncorrelated
antenna channel coefficients, the effective HD spreading gain is
$M N^{\mathrm{HD}}$. Therefore, we propose the use of MMSE with successive
interference cancellation (SIC) {\it only} in the
case where $K^{\mathrm{HD}}\sim M N^{\mathrm{HD}}$, as opposed to
$K^{\mathrm{HD}}\ll M N^{\mathrm{HD}}$ for which single-user detection should be
sufficient.
Define $\mathbf{A}\stackrel{\mathrm{def}}{=}\mathrm{diag}\left\{
\sqrt{P_k g_k}\right\}_{k\in\mathcal{K}^{\mathrm{HD}}}$ and the matrix
$\tilde{\mathbf{C}}$ of effective codes as
$\tilde{\mathbf{C}}\stackrel{\mathrm{def}}{=}
\left[\tilde{\mathbf{c}}_{k_1}
\cdots\tilde{\mathbf{c}}_{k_{K^{\mathrm{HD}}}}\right]$,
where $\{k_1,k_2,\ldots,k_{K^{\mathrm{HD}}}\}=\mathcal{K}^{\mathrm{HD}}$.
Now assume that the columns of $\mathbf{A}$ and $\tilde{\mathbf{C}}$ are
arranged in the descending order with respect to the values $\sqrt{P_k g_k}$
and define $\mathbf{T}\stackrel{\mathrm{def}}{=}\tilde{\mathbf{C}}\mathbf{A}$.
The MMSE-SIC receiver starts by recovering the symbol of user $k_1$ for which
the value $\sqrt{P_{k_1}g_{k_1}}$ is the largest by computing
$r_{k_i}^{\mathrm{HD}}\stackrel{\mathrm{def}}{=}
\boldsymbol{\delta}_{k_i}^{\mathrm{H}}\mathbf{r}_{k_i}$ where
$\boldsymbol{\delta}_k\stackrel{\mathrm{def}}{=}
\left(\mathbf{T}\mathbf{T}^{\mathrm{H}}+
\left(\sigma^2+\sum_{j\in\mathcal{K}^{\mathrm{LD}}}g_j P_j\right)/M
\mathbf{I}\right)^{-1}[\mathbf{T}]_k$ and 
\begin{equation}
 \label{eq:despreading_samples_mc_moma}
 \begin{multlined}
   (K^{\mathrm{HD}}\sim M N^{\mathrm{HD}})\quad   
r_{k_i}^{\mathrm{HD}}=
\sqrt{g_{k_i}P_{k_i}}\boldsymbol{\delta}_{k_i}^{\mathrm{H}}
\tilde{\mathbf{c}}_{k_i}s_{k_i}+
\boldsymbol{\delta}_{k_i}^{\mathrm{H}}\tilde{\mathbf{v}}_{k_i}+\\
\sum_{j=i+1}^{K^{\mathrm{HD}}}\sqrt{g_{k_j}P_{k_j}}
\boldsymbol{\delta}_{k_i}^{\mathrm{H}}\tilde{\mathbf{c}}_{k_j} s_{k_j}+
\sum_{j\in\mathcal{K}^{\mathrm{LD}}}\sqrt{g_j P_j}
\boldsymbol{\delta}_{k_i}^{\mathrm{H}}\tilde{\mathbf{c}}_j s_j\:,
 \end{multlined}
\end{equation}
for $k_i = k_1$ and $\mathbf{r}_{k_1}\stackrel{\mathrm{def}}{=}
\left[r_{k_1,n_1} r_{k_1,n_2}\cdots r_{k_1,n_{N}}\right]^{\mathrm{T}}$.
Once $s_{k_1}$ is decoded correctly, the contribution of~$k_1$ can be removed
from $\{r_{k_2,n}\}_{n\in\mathcal{N}}$ in order to detect the data
symbol of user~$k_2$. The $N$-long vector of signal samples after this
cancellation is denoted as $\mathbf{r}_{k_2}$.
This SIC procedure continues till the detection of all the
HD data symbols. In the case where $K^{\mathrm{HD}}<<M\times N^{\mathrm{HD}}$,
MMSE-SIC is dropped and single-user (SU) detection is instead used by computing
$r_{k}^{\mathrm{HD,SU}}\stackrel{\mathrm{def}}{=}
\mathbf{c}_k^{\mathrm{H}}\mathbf{r}_k$:
\begin{equation}
 \label{eq:despreading_samples_mc_moma_su}
 \begin{multlined}
(K^{\mathrm{HD}}\ll M N^{\mathrm{HD}})\quad
r_{k}^{\mathrm{HD,SU}}=
\sqrt{g_{k}P_{k}}\mathbf{c}_k^{\mathrm{H}}
\tilde{\mathbf{c}}_{k}s_{k}+\\
\sum_{j\in\mathcal{K}\setminus\{k\}}\sqrt{g_j P_j}
\mathbf{c}_k^{\mathrm{H}}\tilde{\mathbf{c}}_j s_j+
\mathbf{c}_k^{\mathrm{H}}\tilde{\mathbf{v}}_{k}\:.
 \end{multlined}
\end{equation}

{\bf LD User Detection:} We propose the use of single-user detection for LD
users, so that for any $k\in\mathcal{K}^{\mathrm{LD}}$, detection
is based on the decision sample $r_k^{\mathrm{LD}}\stackrel{\mathrm{def}}{=}
\mathbf{c}_{k}^{\mathrm{H}}\mathbf{r}_k$ given by
\begin{equation}
 \label{eq:despreading_samples_mc_moma_ld}
r_k^{\mathrm{LD}}=
\sqrt{g_k P_k}\mathbf{c}_{k}^{\mathrm{H}}\tilde{\mathbf{c}}_k s_k+
\sum_{j\neq k}
\sqrt{g_j P_j}\mathbf{c}_{k}^{\mathrm{H}}
\tilde{\mathbf{c}}_j s_j+\mathbf{c}_k^{\mathrm{H}}\tilde{\mathbf{v}}_k\:.
\end{equation}

\subsection{Detection Signal to Noise Plus Interference Ratio}

The signal to noise plus interference ratio (SINR) of any LD
user~$k\in\mathcal{K}_l^{\mathrm{LD}}$ can be derived
from (\ref{eq:despreading_samples_mc_moma_ld}) and is given by
\begin{equation}
 \label{eq:sinr_ld_sic}
 \begin{multlined}
  \mathrm{SINR}_k^{\mathrm{LD}}\stackrel{\mathrm{def}}{=}
\frac{g_k P_k\left|\mathbf{c}_{k}^{\mathrm{H}}
\tilde{\mathbf{c}}_k\right|^2}{
\sum_{j\in\mathcal{K}\setminus\{k\}}^{K}
g_j P_j\left|\mathbf{c}_{k}^{\mathrm{H}}
\tilde{\mathbf{c}}_j\right|^2+\sigma_{\mathrm{LD}}^2}\:,
 \end{multlined}
\end{equation}
where $\sigma_{\mathrm{LD}}^2\stackrel{\mathrm{def}}{=}
\frac{1}{M^2}\sum_{n\in\mathcal{N}}\left|\left[\mathbf{c}_k\right]_n\right|^2
\mathbf{d}_{k,n}^{\mathrm{H}}
\mathbf{d}_{k,n}\sigma^2$.
The SINR for any $k\in\mathcal{K}^{\mathrm{HD}}$ when MMSE-SIC is used
follows from \eqref{eq:despreading_samples_mc_moma} as
\begin{equation}
 \label{eq:sinr_hd_sic}
 \begin{multlined}
  \mathrm{SINR}_{k_i}^{\mathrm{HD}}\stackrel{\mathrm{def}}{=}\\
\frac{g_{k_i} P_{k_i}
\left|\boldsymbol{\delta}_{k_i}^{\mathrm{H}}
\tilde{\mathbf{c}}_{k_i}\right|^2}{\sum_{j=i+1}^{K^{\mathrm{HD}}}g_{k_j}
P_{k_j}
\left|\boldsymbol{\delta}_{k_i}^{\mathrm{H}}
\tilde{\mathbf{c}}_{k_j}\right|^2+
\sum_{j\in\mathcal{K}^{\mathrm{LD}}}g_j P_j
\left|\boldsymbol{\delta}_{k_i}^{\mathrm{H}}
\tilde{\mathbf{c}}_j\right|^2+\sigma_{\mathrm{HD}}^2}
 \end{multlined}
\end{equation}
where $\sigma_{\mathrm{HD}}^2\stackrel{\mathrm{def}}{=}
\frac{1}{M^2}\sum_{n\in\mathcal{N}}\left|\left[
\boldsymbol{\delta}_k\right]_n\right|^2
\mathbf{d}_{k,n}^{\mathrm{H}}\mathbf{d}_{k,n}\sigma^2$.
While in the case of single-user detection
\eqref{eq:despreading_samples_mc_moma_su} gives rise to
\begin{equation}
 \label{eq:sinr_su}
  \mathrm{SINR}_{k}^{\mathrm{HD,SU}}
\stackrel{\mathrm{def}}{=}
\frac{g_k P_k\left|\mathbf{c}_k^{\mathrm{H}}\tilde{\mathbf{c}}_k\right|^2}{
\sum_{j\in\mathcal{K}\setminus\{k\}}g_j P_j
\left|\mathbf{c}_k^{\mathrm{H}}\tilde{\mathbf{c}}_j\right|^2+
\sigma_{\mathrm{HD}}^2}\:,
\end{equation}
Finally, define the {\it instantaneous} bits/s/Hz
capacities\footnote{log denotes the base-2 logarithm.} 
$R_k^{\mathrm{LD}}\stackrel{\mathrm{def}}{=}
\log\left(1+\mathrm{SINR}_k^{\mathrm{LD}}\right)$ and
$R_k^{\mathrm{HD,SU}}\stackrel{\mathrm{def}}{=}
\log\left(1+\mathrm{SINR}_k^{\mathrm{HD,SU}}\right)$.

\subsection{Asymptotic Analysis}

The following theorem states that both inter-class and HD intra-class
interference become asymptotically negligible as $M$ increases even if only
single-user detection is employed.
\begin{theo}
 \label{theo:cdma_massive_mimo_equivalence}
Assume that the components of each $\mathbf{W}_l^{\mathrm{LD}}$ are realizations
of i.i.d. zero-mean random variables that satisfy the condition in
\eqref{eq:w_ld_cond}. Also assume that the empirical distribution of the
large-scale fading coefficients $\{g_k\}_{k\in\mathcal{K}}$ converges as
$K\to\infty$ to the distribution of a random variable with mean $\mathbb{E}[g]$.
Finally, $\forall k\in\mathcal{K}^{\mathrm{LD}}, P_k=P^{\mathrm{LD}}$.
If $\frac{K^{\mathrm{LD}}}{M}\to_{M\to\infty}\alpha$ while
$K^{\mathrm{HD}}=\mathcal{O}_M(1)$ and $N\ll N_{\mathrm{FFT}}$, then we have as
$M\to\infty$
\begin{align}
 &(k\in\mathcal{K}^{\mathrm{HD}})
&\sum_{j\in\mathcal{K}\setminus\{k\}}g_j P_j
\left|\mathbf{c}_k^{\mathrm{H}}\tilde{\mathbf{c}}_j\right|^2
\stackrel{p}{\rightarrow}0\:,\label{eq:theo1}\\
&(k\in\mathcal{K}_l^{\mathrm{LD}})
&\sum_{j\in\mathcal{K}\setminus\{k\}}g_j P_j
\left|\mathbf{c}_k^{\mathrm{H}}\tilde{\mathbf{c}}_j\right|^2-
\frac{c K_l^{\mathrm{LD}}}{N_l^{\mathrm{LD}}M}
\stackrel{p}{\rightarrow}0\:,\label{eq:theo2}\\
&(k\in\mathcal{K})
&\mathbf{c}_k^{\mathrm{H}}\tilde{\mathbf{c}}_k
\stackrel{a.s.}{\rightarrow}1\:,\label{eq:theo3}
\end{align}
where $c\stackrel{\mathrm{def}}{=}P^{\mathrm{LD}}\mathbb{E}[g]$ and where
$\stackrel{p}{\rightarrow}$ and $\stackrel{a.s.}{\rightarrow}$ stand for
convergence in probability and for almost sure convergence of random variables,
respectively.
\end{theo}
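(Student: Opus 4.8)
The plan is to reduce every cross-correlation $\mathbf{c}_k^{\mathrm H}\tilde{\mathbf{c}}_j$ to a scalar average over the $M$ antennas and then to exploit two unrelated averaging mechanisms. Using \eqref{eq:cj_def} and specializing the combiner to MRC, $\mathbf{d}_{k,n}=\mathbf{h}_{k,n}$ (the MMSE combiner agrees to leading order), one has $[\tilde{\mathbf{c}}_j]_n=e_{k,j,n}[\mathbf{c}_j]_n$ with $e_{k,j,n}\stackrel{\mathrm{def}}{=}\frac1M\mathbf{h}_{k,n}^{\mathrm H}\mathbf{h}_{j,n}$, so that
\[ \mathbf{c}_k^{\mathrm H}\tilde{\mathbf{c}}_j=\sum_{n\in\mathcal{N}}[\mathbf{c}_k]_n^{*}[\mathbf{c}_j]_n\,e_{k,j,n}. \]
The two mechanisms are spatial channel hardening, i.e.\ the law of large numbers over the $M$ antennas, giving $e_{k,k,n}\to c_0^h=1$ and $e_{k,j,n}\to0$ for $j\neq k$; and frequency flatness over the chosen band, which follows from $N\ll N_{\mathrm{FFT}}$: the $N$ consecutive subcarriers lie within the coherence bandwidth, so $c_{n-m}^h\to1$ for all $n,m\in\mathcal{N}$. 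I will treat the self term, the cross-class/HD terms, and the same-class term separately.

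First, \eqref{eq:theo3} is immediate from spatial hardening: $\mathbf{c}_k^{\mathrm H}\tilde{\mathbf{c}}_k=\sum_n|[\mathbf{c}_k]_n|^2\,\frac1M\|\mathbf{h}_{k,n}\|^2$, and since $N$ is fixed the strong law gives $\frac1M\|\mathbf{h}_{k,n}\|^2\to1$ a.s.\ for each $n$, so the weighted sum tends to $\|\mathbf{c}_k\|^2=1$. For $j\neq k$ the $e_{k,j,n}$ are zero-mean with $\mathbb{E}[e_{k,j,n}e_{k,j,m}^{*}]=\frac1M|c_{n-m}^h|^2$ by independence of distinct users' channels. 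Taking $\mathbb{E}|\mathbf{c}_k^{\mathrm H}\tilde{\mathbf{c}}_j|^2$ over the channels and then over the entries of $\mathbf{W}_{l'}^{\mathrm{LD}}$ (using $\mathbb{E}[w_{up}w_{vq}^{*}]=\frac1{N_{l'}^{\mathrm{LD}}}\delta_{uv}\delta_{pq}$, forced by \eqref{eq:w_ld_cond}) collapses the whole $l'$-th LD class to the quadratic form $\frac{cK_{l'}^{\mathrm{LD}}}{MN_{l'}^{\mathrm{LD}}}\,\mathbf{c}_k^{\mathrm H}(\mathbf{P}_{l'}\odot\mathbf{C}^h)\mathbf{c}_k$, with $\mathbf{P}_{l'}\stackrel{\mathrm{def}}{=}\mathbf{U}_{l'}^{\mathrm{LD}}(\mathbf{U}_{l'}^{\mathrm{LD}})^{\mathrm H}$ the projector onto the $l'$-th code subspace, $\mathbf{C}^h$ the $N\times N$ matrix with entries $|c_{n-m}^h|^2$, and $\odot$ the Hadamard product. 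Flatness now enters: since $\mathbf{C}^h\to\mathbf{1}\mathbf{1}^{\mathrm H}$ we get $\mathbf{P}_{l'}\odot\mathbf{C}^h\to\mathbf{P}_{l'}$, so the form tends to $\mathbf{c}_k^{\mathrm H}\mathbf{P}_{l'}\mathbf{c}_k$; for an HD victim ($\mathbf{c}_k$ in the HD subspace) and for an LD victim of a foreign class ($l'\neq l$), subspace orthogonality gives $\mathbf{P}_{l'}\mathbf{c}_k=0$ and the contribution vanishes. Together with the crude $\mathcal{O}(1/M)$ bound on the $K^{\mathrm{HD}}=\mathcal{O}_M(1)$ HD interferers, this proves \eqref{eq:theo1} and shows that in \eqref{eq:theo2} only the same-class term ($l'=l$, where $\mathbf{P}_l\mathbf{c}_k=\mathbf{c}_k$) can survive.

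The remaining and most delicate step is the same-class floor in \eqref{eq:theo2}. Writing $\mathbf{h}_k$ for the common value of $\{\mathbf{h}_{k,n}\}_n$ in the flat limit and $\mathbf{w}_k$ for the combining column of $k$ (so $\mathbf{c}_k=\mathbf{U}_l^{\mathrm{LD}}\mathbf{w}_k$), one has $\mathbf{c}_k^{\mathrm H}\tilde{\mathbf{c}}_j\to\frac1M(\mathbf{h}_k^{\mathrm H}\mathbf{h}_j)(\mathbf{w}_k^{\mathrm H}\mathbf{w}_j)$, so I must analyse $S\stackrel{\mathrm{def}}{=}\sum_{j\in\mathcal{K}_l^{\mathrm{LD}}\setminus\{k\}}g_jP^{\mathrm{LD}}\frac1{M^2}|\mathbf{h}_k^{\mathrm H}\mathbf{h}_j|^2|\mathbf{w}_k^{\mathrm H}\mathbf{w}_j|^2$, a sum of $K_l^{\mathrm{LD}}=\Theta(M)$ terms that are conditionally i.i.d.\ across $j$ given $(\mathbf{h}_k,\mathbf{w}_k)$. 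Its conditional mean equals $\frac{P^{\mathrm{LD}}\|\mathbf{h}_k\|^2}{M^2N_l^{\mathrm{LD}}}\sum_{j\in\mathcal{K}_l^{\mathrm{LD}}}g_j$; applying $\frac1M\|\mathbf{h}_k\|^2\to1$ and $\frac1{K_l^{\mathrm{LD}}}\sum_{j}g_j\to\mathbb{E}[g]$ (convergence of the empirical law of $\{g_k\}$) produces exactly $\frac{cK_l^{\mathrm{LD}}}{MN_l^{\mathrm{LD}}}$, while the conditional variance is $\mathcal{O}(K_l^{\mathrm{LD}}/M^2)=\mathcal{O}(1/M)$, so Chebyshev gives $S\stackrel{p}{\to}\frac{cK_l^{\mathrm{LD}}}{MN_l^{\mathrm{LD}}}$ and hence \eqref{eq:theo2}.

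The main obstacle is the flatness reduction in the second paragraph: one cannot simply set $c_{n-m}^h=1$, and the cross-class form $\mathbf{c}_k^{\mathrm H}(\mathbf{P}_{l'}\odot\mathbf{C}^h)\mathbf{c}_k$ has diagonal part $N_{l'}^{\mathrm{LD}}/N\neq0$ that is cancelled only in the exact flat limit by the off-diagonal, so the argument must be made quantitative in the coherence-bandwidth gap. This is the sole point where the code-domain orthogonality underpinning MOMA is genuinely exploited, and the whole separation between the vanishing inter-class interference \eqref{eq:theo1} and the non-vanishing same-class floor \eqref{eq:theo2} stands or falls with it; by comparison, controlling the conditionally-dependent $\Theta(M)$-term sum $S$ (which only requires bounded fourth-order channel and $\mathbf{W}$ moments) is routine.
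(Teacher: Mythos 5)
Your proposal is correct and follows essentially the same route as the paper's proof: decompose the interference by class, use channel hardening for the self term and the $\mathcal{O}_M(1)$ HD cross terms, average over the entries of $\mathbf{W}_l^{\mathrm{LD}}$ and the empirical law of $\{g_j\}$ for the $\Theta(M)$ LD interferers, and invoke $N\ll N_{\mathrm{FFT}}$ together with the orthogonality of $\mathbf{U}^{\mathrm{HD}}$ and the $\mathbf{U}_{l'}^{\mathrm{LD}}$ to annihilate the cross-class quadratic form. The differences are only in presentation: you carry out the first/second-moment and Chebyshev computation directly where the paper cites Proposition~3.3 of Tse--Hanly, you actually write out the same-class floor that the paper dispatches with ``similar arguments,'' and the soft spot you flag (setting $|c_{n-m}^h|^2=1$ rather than quantifying the residual of $\mathbf{c}_k^{\mathrm{H}}(\mathbf{P}_{l'}\odot\mathbf{C}^h)\mathbf{c}_k$) is glossed over in exactly the same way in the paper's own argument.
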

\begin{proof}
Assume that $L=2$ so that we can drop from now on the use of the index $l$.
This assumption is only made for the sake of ease of presentation as the
following proof arguments apply in the general case of $L\geq2$.

To show that \eqref{eq:theo1} holds, we write its left-hand side as
\begin{align}
 \label{eq:hd_mui}
 &\sum_{j\in\mathcal{K}\setminus\{k\}}g_j P_j
\left|\mathbf{c}_k^{\mathrm{H}}\tilde{\mathbf{c}}_j\right|^2=\nonumber\\
&\sum_{j\in\mathcal{K}^{\mathrm{HD}}\setminus\{k\}}g_j P_j
\left|\sum_{n\in\mathcal{N}}
\left[\mathbf{c}_k\right]_n
\frac{1}{M}\mathbf{h}_{k,n}^{\mathrm{H}}\mathbf{h}_{j,n}
\left[\mathbf{c}_j\right]_n\right|^2+\nonumber\\
&\sum_{j\in\mathcal{K}^{\mathrm{LD}}}g_j P_j
\left|\sum_{n\in\mathcal{N}}
\left[\mathbf{c}_k\right]_n
\frac{1}{M}\mathbf{h}_{k,n}^{\mathrm{H}}\mathbf{h}_{j,n}^{\mathrm{H}}
\sum_{u=1}^{N^{\mathrm{LD}}}\left[\mathbf{U}^{\mathrm{LD}}\right]_{n,u}
\left[\mathbf{W}^{\mathrm{LD}}\right]_{u,j}
\right|^2
\end{align}
The first term in \eqref{eq:hd_mui} converges almost surely (a.s.) to zero as
$M\to\infty$ due to applying the law of large numbers to the sum
$\frac{1}{M}\mathbf{h}_{k,n}^{\mathrm{H}}\mathbf{h}_{j,n}$.
As for the second term, it can be rewritten by referring to
\eqref{eq:spreading_code_LD} as
\begin{align}
 \label{eq:2t}
  &\sum_{j\in\mathcal{K}^{\mathrm{LD}}}g_j P_j
\left|\sum_{n\in\mathcal{N}}\left[\mathbf{c}_k\right]_n
\frac{1}{M}\mathbf{h}_{k,n}^{\mathrm{H}}\mathbf{h}_{j,n}
\sum_{u=1}^{N^{\mathrm{LD}}}\left[\mathbf{U}^{\mathrm{LD}}\right]_{n,u}
\left[\mathbf{W}^{\mathrm{LD}}\right]_{u,j}\right|^2\nonumber\\
&=P^{\mathrm{LD}}\sum_{u,v=1}^{N^{\mathrm{LD}}}\sum_{n,m\in\mathcal{N}}
\left[\mathbf{c}_k\right]_n\left[\mathbf{c}_k\right]_m^*
\left[\mathbf{U}^{\mathrm{LD}}\right]_{n,u}
\left[\mathbf{U}^{\mathrm{LD}}\right]_{m,v}\times\nonumber\\
&\frac{1}{M^2}\sum_{j\in\mathcal{K}^{\mathrm{LD}}}g_j
\mathbf{h}_{k,n}^{\mathrm{H}}\mathbf{h}_{j,n}
\mathbf{h}_{j,m}^{\mathrm{H}}\mathbf{h}_{k,m}
\left[\mathbf{W}^{\mathrm{LD}}\right]_{u,j}
\left[\mathbf{W}^{\mathrm{LD}}\right]_{v,j}^*\:.
\end{align}
Defining $\forall n\in\mathcal{N}$, $\xi_{k,j,n}\stackrel{\mathrm{def}}{=}
\frac{1}{\sqrt{M}}\mathbf{h}_{k,n}^{\mathrm{H}}\mathbf{h}_{j,n}$ we can write 
\begin{equation}
 \label{eq:sum_l_to_p_inter}
 \begin{multlined}
\frac{1}{M^2}\sum_{j\in\mathcal{K}^{\mathrm{LD}}}g_j
\mathbf{h}_{k,n}^{\mathrm{H}}\mathbf{h}_{j,n}
\mathbf{h}_{j,m}^{\mathrm{H}}\mathbf{h}_{k,m}
\left[\mathbf{W}^{\mathrm{LD}}\right]_{u,j}
\left[\mathbf{W}^{\mathrm{LD}}\right]_{v,j}^*=\\
\frac{K^{\mathrm{LD}}}{M}\frac{1}{K^{\mathrm{LD}}}
\sum_{j\in\mathcal{K}^{\mathrm{LD}}}g_j \xi_{k,j,n}\xi_{k,j,m}^*
\left[\mathbf{W}^{\mathrm{LD}}\right]_{u,j}
\left[\mathbf{W}^{\mathrm{LD}}\right]_{v,j}^*\:.
 \end{multlined}
\end{equation}
Now, thanks to the assumption that the empirical distribution of
$\{g_k\}_{k\in\mathcal{K}}$ converges as $K\to\infty$
to the distribution of a random variable with mean $\mathbb{E}[g]$ and that
the components of $\mathbf{W}^{\mathrm{LD}}$ are realizations of i.i.d.
zero-mean random variables, the arguments of the proof of Proposition 3.3
from~\cite{effective_bandwidth} can be applied to show, after tedious but
straightforward steps, that for each value of
$(n,m,u,v)\in\mathcal{N}^2\times
\{1,2,\ldots,N^{\mathrm{LD}}\}^2$
\begin{equation}
 \label{eq:sum_l_to_p}
 \begin{multlined}
 \frac{1}{K^{\mathrm{LD}}}
\sum_{j\in\mathcal{K}^{\mathrm{LD}}}g_j \xi_{k,j,n}\xi_{k,j,m}^*
\left[\mathbf{W}^{\mathrm{LD}}\right]_{u,j}
\left[\mathbf{W}^{\mathrm{LD}}\right]_{v,j}^*\stackrel{p}{\rightarrow}\\
\mathbb{E}\left[g\right]\mathbb{E}\left[\xi_{k,j,n}\xi_{k,j,m}^*\right]
\mathbb{E}\left[\left[\mathbf{W}^{\mathrm{LD}}\right]_{u,j}
\left[\mathbf{W}^{\mathrm{LD}}\right]_{v,j}^*\right]=\\
\frac{1}{N^{\mathrm{LD}}}\mathbb{E}\left[g\right]
\left|c_{n-m}^h\right|^2\delta_{u,v}\:,
 \end{multlined}
\end{equation}
where $\{c_u^h\}_{u\in\mathbb{Z}}$ is the frequency domain autocorrelation
sequence of users' channels and where $\delta_{u,v}=1$ if $u=v$ and
$\delta_{u,v}=0$ otherwise.
Note that $\forall n,m\in\mathcal{N}, \left|c_{n-m}^h\right|^2\approx1$ thanks
to the assumption that $N\ll N_{\mathrm{FFT}}$.
Plugging $\left|c_{n-m}^h\right|^2=1$
and \eqref{eq:sum_l_to_p} into \eqref{eq:2t}, we get
\begin{align}
  &\sum_{j\in\mathcal{K}^{\mathrm{LD}}}g_j P_j
\left|\sum_{n\in\mathcal{N}}\left[\mathbf{c}_k\right]_n
\frac{1}{M}\mathbf{h}_{k,n}^{\mathrm{H}}\mathbf{h}_{j,n}
\sum_{u=1}^{N^{\mathrm{LD}}}\left[\mathbf{U}^{\mathrm{LD}}\right]_{n,u}
\left[\mathbf{W}^{\mathrm{LD}}\right]_{u,j}\right|^2\nonumber\\
&\stackrel{p}{\rightarrow}
\frac{\alpha P^{\mathrm{LD}}\mathbb{E}\left[g\right]}{N^{\mathrm{LD}}}
\sum_{u=1}^{N^{\mathrm{LD}}}\sum_{n,m\in\mathcal{N}}
\left[\mathbf{c}_k\right]_n\left[\mathbf{c}_k\right]_m^*
\left[\mathbf{U}^{\mathrm{LD}}\right]_{n,u}
\left[\mathbf{U}^{\mathrm{LD}}\right]_{m,u}\nonumber\\
&=\frac{\alpha P^{\mathrm{LD}}\mathbb{E}\left[g\right]}{N^{\mathrm{LD}}}
\sum_{u=1}^{N^{\mathrm{LD}}}\left|\mathbf{c}_k^{\mathrm{H}}
\left[\mathbf{U}^{\mathrm{LD}}\right]_u\right|^2\nonumber\\
&=0\:,
\end{align}
where the last equality follows from the fact that the HD spreading code
$\mathbf{c}_k$ is orthogonal by construction to
$\left[\mathbf{U}^{\mathrm{LD}}\right]_u$ for any
$u\in\{1,2,\ldots,N^{\mathrm{LD}}\}$.
Using similar arguments, one can show that \eqref{eq:theo2} holds true.
Finally, \eqref{eq:theo3} holds due to \eqref{eq:spreading_code_LD} and
\eqref{eq:spreading_code_HD} and to the law of large numbers
applied to the sum $\frac{1}{M}\mathbf{h}_{k,n}^{\mathrm{H}}\mathbf{h}_{k,n}$.
This completes the proof of Theorem~\ref{theo:cdma_massive_mimo_equivalence}.
\end{proof}
Applying Theorem~\ref{theo:cdma_massive_mimo_equivalence} along with the
continuous-mapping theorem to~\eqref{eq:sinr_su} reveals that
$R_k^{\mathrm{HD,SU}}$ is increasing with $M$ in an unbounded manner.
As for LD users, the following result can be used to properly tune
$P^{\mathrm{LD}}$, $K_l^{\mathrm{LD}}$ and $N_l^{\mathrm{LD}}$ so that
$\forall l\in\{1\cdots L-1\}$ the target $r_l^{\mathrm{LD}}$ is
achieved.
\begin{corollary}
 Under the assumptions made in Theorem~\ref{theo:cdma_massive_mimo_equivalence},
$R_k^{\mathrm{LD}}-\log\left(1+\mathrm{SINR}_k^{\mathrm{LD},\infty}\right)
\stackrel{p}{\rightarrow}0$, where
\begin{equation}
 \label{eq:sinr_infty}
  \forall k\in\mathcal{K}_l^{\mathrm{LD}},\quad
\mathrm{SINR}_k^{\mathrm{LD},\infty}\stackrel{\mathrm{def}}{=}
\frac{g_k P^{\mathrm{LD}}}{\frac{c K_l^{\mathrm{LD}}}{N_l^{\mathrm{LD}}M}+
\frac{\sigma^2}{M}}\:.
\end{equation}
\end{corollary}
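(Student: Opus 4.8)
The plan is to push the three limits of Theorem~\ref{theo:cdma_massive_mimo_equivalence} through the continuous map $x\mapsto\log(1+x)$. Since $R_k^{\mathrm{LD}}=\log(1+\mathrm{SINR}_k^{\mathrm{LD}})$, it suffices to show that $\mathrm{SINR}_k^{\mathrm{LD}}$ converges in probability to the same finite, strictly positive constant to which the deterministic sequence $\mathrm{SINR}_k^{\mathrm{LD},\infty}$ converges; the continuous-mapping theorem then transfers this to the capacities and yields $R_k^{\mathrm{LD}}-\log(1+\mathrm{SINR}_k^{\mathrm{LD},\infty})\stackrel{p}{\rightarrow}0$. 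I would therefore treat the numerator and the denominator of $\mathrm{SINR}_k^{\mathrm{LD}}$ in \eqref{eq:sinr_ld_sic} in turn.

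For the numerator, I set $P_k=P^{\mathrm{LD}}$ and apply \eqref{eq:theo3}, which gives $\mathbf{c}_k^{\mathrm{H}}\tilde{\mathbf{c}}_k\stackrel{a.s.}{\rightarrow}1$; by the continuous-mapping theorem $g_kP^{\mathrm{LD}}\left|\mathbf{c}_k^{\mathrm{H}}\tilde{\mathbf{c}}_k\right|^2\stackrel{a.s.}{\rightarrow}g_kP^{\mathrm{LD}}$, matching the numerator of $\mathrm{SINR}_k^{\mathrm{LD},\infty}$. For the denominator, I split off the multi-user interference and the noise $\sigma_{\mathrm{LD}}^2$. Equation \eqref{eq:theo2} gives $\sum_{j\neq k}g_jP_j\left|\mathbf{c}_k^{\mathrm{H}}\tilde{\mathbf{c}}_j\right|^2-\frac{cK_l^{\mathrm{LD}}}{N_l^{\mathrm{LD}}M}\stackrel{p}{\rightarrow}0$, and since the hypothesis $K^{\mathrm{LD}}/M\to\alpha$ makes $\frac{cK_l^{\mathrm{LD}}}{N_l^{\mathrm{LD}}M}$ converge to the strictly positive constant $\frac{c\alpha}{N_l^{\mathrm{LD}}}$, the interference term converges in probability to that constant. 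For the noise, I rewrite $\sigma_{\mathrm{LD}}^2=\frac{\sigma^2}{M}\sum_{n\in\mathcal{N}}\left|[\mathbf{c}_k]_n\right|^2\frac{1}{M}\mathbf{d}_{k,n}^{\mathrm{H}}\mathbf{d}_{k,n}$ and apply the law of large numbers to $\frac{1}{M}\mathbf{d}_{k,n}^{\mathrm{H}}\mathbf{d}_{k,n}$ (the channel-hardening effect) together with the unit-norm normalization $\sum_n\left|[\mathbf{c}_k]_n\right|^2=1$; this gives $M\sigma_{\mathrm{LD}}^2\to\sigma^2$, so $\sigma_{\mathrm{LD}}^2-\frac{\sigma^2}{M}\stackrel{p}{\rightarrow}0$ and in particular $\sigma_{\mathrm{LD}}^2\to0$.

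Putting the pieces together, the denominator of $\mathrm{SINR}_k^{\mathrm{LD}}$ differs from that of $\mathrm{SINR}_k^{\mathrm{LD},\infty}$ by a term that vanishes in probability, while both converge to $\frac{c\alpha}{N_l^{\mathrm{LD}}}>0$. Because the limiting denominator is bounded away from zero, the map $(a,b)\mapsto a/b$ is continuous at the joint limit, so $\mathrm{SINR}_k^{\mathrm{LD}}$ and $\mathrm{SINR}_k^{\mathrm{LD},\infty}$ share the common limit $g_kP^{\mathrm{LD}}N_l^{\mathrm{LD}}/(c\alpha)$; a last application of the continuous-mapping theorem to $\log(1+\cdot)$ closes the argument.

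The step needing the most care is the denominator analysis: one must verify that the residual interference tends to a \emph{strictly positive} constant rather than to zero, because only in this interference-limited regime is the ratio map continuous at the limit and the SINR nondegenerate. This is precisely where the load hypothesis $K^{\mathrm{LD}}/M\to\alpha$ is used, and it also explains why the vanishing noise term $\frac{\sigma^2}{M}$ can be retained in the closed form for finite-$M$ accuracy without altering the stated limit.
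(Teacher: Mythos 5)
Your argument is correct and follows exactly the route the paper intends: the corollary is presented as an immediate consequence of Theorem~\ref{theo:cdma_massive_mimo_equivalence} combined with the continuous-mapping theorem, which is precisely what you carry out, with the useful extra detail of checking that $M\sigma_{\mathrm{LD}}^2\to\sigma^2$ and that the limiting denominator $\frac{c\alpha}{N_l^{\mathrm{LD}}}$ is strictly positive so that the ratio map is continuous at the joint limit. The only point worth tightening is notational: you should invoke the convergence of $K_l^{\mathrm{LD}}/M$ (rather than of $K^{\mathrm{LD}}/M$) for the class-$l$ interference term, which is harmless in the $L=2$ setting where the two coincide.
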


\section{Numerical Results}
\label{sec:simus}

Simulations results were obtained assuming users' distances to the
BS are randomly chosen from the interval $\left[25,100\right]$~m and that the
associated pathloss coefficients $g_k$ are computed using the COST-231 Hata
model~\cite{cost_hata} with a carrier frequency $f_0=900$ MHz. Users' transmit
power is equal to 23 dBm while the noise power spectral density is equal to
$N_0=-174$ dBm/Hz. Two channel models are considered, namely the Extended
Pedestrian A (EPA) and the Extended Vehicular A (EVA)
models~\cite{3gpp_etu_epa}. Channels generated using the EPA model have smaller
delay spreads, and hence frequency responses that are less selective, than
the EVA model.
For the underlying OFDM system we assume a total number $N_{\mathrm{FFT}}=1,024$
of subcarriers out of which, as in LTE, only $N_{\mathrm{SC}}=600$ SCs are
used for data transmission.
Furthermore, we consider a 2-class MOMA-OFDM, i.e. $L=2$, that is implemented on
the basis of $N=32$ subcarriers using a $32\times32$ Walsh-Hadamard matrix,
i.e. 19 instances of MOMA-OFDM are needed to cover the $N_{\mathrm{SC}}=600$
available SCs. Out of the $N$ orthogonal codes, $N^{\mathrm{HD}}=\frac{7}{8N}$
are reserved for HD users and $N^{\mathrm{LD}}=\frac{1}{8N}$ for LD users.
This partition was chosen for the purposes of fair comparison with LTE-M.
Indeed, in the latter system 72 SCs are reserved for M2M transmissions,
corresponding to $\frac{72}{600}\approx\frac{1}{8}$ of the available
$N_{\mathrm{SC}}=600$ SCs.
Finally, $K^{\mathrm{HD}}=8N^{\mathrm{HD}}$ corresponding to a spatial
multiplexing gain of 8. 
All the following results have been obtained by averaging over 100 realizations
of users' random positions in the cell.

Fig.~\ref{fig:kLD_rLD_mux} shows the average number of LD users that can be
simultaneously served using MIMO-MOMA as function of the LD data rate
requirement $r^{\mathrm{LD}}$ when compared to both LoRa and a
narrow-band cellular IoT system implemented using LTE-M parameters. From the
figure we can notice the significant advantage of using MOMA as opposed to
narrow-band access schemes in terms of IoT network capacity. Indeed, the
resources reserved in the latter systems for M2M communications turn out to be
under-used when compared to MOMA. Also note that the performance gap between a
narrow-band cellular IoT system and MOMA is the largest on the range of low to
moderate LD target data rates. As for the range of high target data rates
(on which LoRa slightly outperforms the 2-class implementation of MOMA),
{\bf introducing a third class for high-rate M2M transmissions} can in
principle cancel this performance gap.
\begin{figure}[h]
 \centering
 \includegraphics[width=1\hsize]{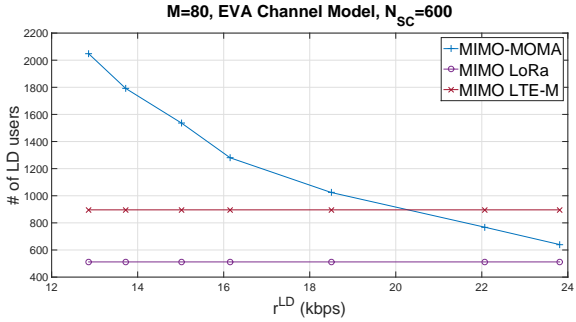}
 \caption{Number of served LD users within one OFDM symbol vs. the
LD data rate requirement. $M=80$ with spatial multiplexing.}
 \label{fig:kLD_rLD_mux}
\end{figure}
It is worth mentioning that the maximum number of simultaneous M2M transmissions
in LoRa was computed assuming both a spatial multiplexing gain of 8 (for fair
comparison with our MIMO-MOMA setting) and the availability of 16
125kHz-channels, each of which can support up to 7 concurrent transmissions
thanks to the multiplexing capabilities of chirp spread
spectrum~\cite{lora_phy}\footnote{In LoRa, only concurrent
transmissions using different spreading factors, and hence having
different data rates on the range 0.3$\sim$50 kbps, are orthogonal.
This explains why the maximum number of simultaneous transmissions in LoRa is
plotted as a constant function with respect to the target data rate.}. 

Finally, Fig.~\ref{fig:rHD_mux} shows that MIMO-MOMA achieves HD data rates that
are very close to the maximum value achievable with perfect orthogonality.
For instance, the HD data rate achieved by MOMA on EPA channels (resp. on EVA
channels) with single-user detection stays withing 99\% (resp. within 86\%) of
the perfect-orthogonality upper bound.
Note that this performance is achieved by MOMA while serving a
number of concurrent LD transmissions as large as 4 times the number that can be
supported with narrow-band cellular IoT solutions.
\begin{figure}[h]
 \centering
 \includegraphics[width=1\hsize]{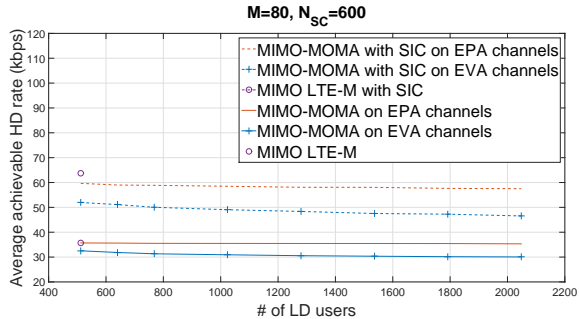}
 \caption{Achievable rate of HD users vs. the number of served LD users
within one OFDM symbol. $M=80$ with spatial multiplexing.}
 \label{fig:rHD_mux}
\end{figure}

\section{Conclusion}
In this article, we presented a novel multiple access scheme (MOMA) for
next-generation cellular networks that is fully compatible with massive MIMO.
This scheme is based on assigning, in a flexible and dynamic manner, different
resources and different degrees of resource overloading to different
classes of users, each representing a different data rate requirement and/or
a different service type. Moreover, transmissions from different classes in
MOMA are quasi-orthogonal. This way, the use of non-orthogonal access for the
lower-rate classes would only slightly affect the broadband users, dropping the
need for wasteful guard bands, for subband-based filtering or for complex
receiver schemes.


\end{document}